\begin{document}
\title{Point-Location in The Arrangement of Curves}
%
%
\author{Sepideh Aghamolaei\inst{1}\and
Mohammad Ghodsi\inst{1,2}}
\authorrunning{S. Aghamolaei, M. Ghodsi}
%
\institute{Department of Computer Engineering, Sharif University of Technology\\
\email{aghamolaei@ce.sharif.edu}\and
Institute for Research in Fundamental Sciences (IPM)\\
\email{ghodsi@sharif.edu}}
\maketitle              
\begin{abstract}
An arrangement of $n$ curves in the plane is given. The query is a point $q$ and the goal is to find the face of the arrangement that contains $q$. A data-structure for point-location, preprocesses the curves into a data structure of polynomial size in $n$, such that the queries can be answered in time polylogarithmic in $n$.

We design a data structure for solving the point location problem queries in $O(\log C(n)+\log S(n))$ time using $O(T(n)+S(n)\log(S(n)))$ preprocessing time, if a polygonal subdivision of total size $S(n)$, with cell complexity at most $C(n)$ can be computed in time $T(n)$, such that the order of the parts of the curves inside each cell has a monotone order with respect to at least one segment of the boundary of the cell. We call such a partitioning a curve-monotone polygonal subdivision.

\keywords{Point-location \and Pseudo-line Arrangement \and Data Structures \and Computational Geometry.}
\end{abstract}
\section{Introduction}
An arrangement of curves is a subdivision of the plane induced by those curves.
The arrangement of pseudo-lines has complexity $O(n^2)$, where a pseudo-line is a curve that goes to infinity in two directions, and two pseudo-lines can intersect at most once.
An output-sensitive algorithm for the arrangement of Jordan arcs with running time $O((n+k)\log n)$ exists, where $k$ is the number of intersections in the arrangement~\cite{toth2017handbook}.

Bézier curves based on Bernstein basis polynomials from computer graphics for non-rasterized curves are examples of real-world examples of polynomial curves.

Testing whether a given pseudoline arrangement can be converted into an straight-line arrangement is known as the stretchability problem is NP-hard and is equivalent to existential theory of the reals~\cite{shor1991stretchability}. A generalized version of this problem is the d-stretchability discusses the possibility of replacing each pseudoline with a polynomial of degree $d$ which requires $\Omega(\sqrt{n})$ for all simple arrangements~\cite{toth2017handbook}.

Finding a point in a planar subdivision and returning the face that contain that point is called {\em point location}.
Point location in a polygonal subdivision with $n$ segments can be solved in $O(n\log n)$ preprocessing to build the data-structure and $O(\log n)$ query time for $n$ segments~\cite{de1997computational}.
Without using a point-location data-structure, the problem can be solved in $O(nQ(n))$ time, for arbitrary shapes with an inclusion-exclusion oracle $Q(n)$, by testing whether the point is inside or outside each shape.

Since the problem of converting the arrangement of pseudo-lines into a polygonal arrangement is NP-hard~\cite{shor1991stretchability}, the point-location on polygonal subdivisions is not helpful.
The problem of point-location in a set of congruent disks has already been discussed, and an algorithm with $O(\log n+k)$ query time and $O(n^3\log n)$ preprocessing exists for it~\cite{aghamolaei2020windowing}.

\paragraph{Contributions}
The paper mainly generalizes the results of~\cite{aghamolaei2020windowing} to arbitrary curves, improving their query time and their parallelization.
\begin{itemize}
\item
We improve the previous work on the point-location of congruent disks to use $O(\log n)$ query time, instead of $O(\log n+k)$, at the cost of increasing the preprocessing time to $O(n^4\log n)$.
\item
We give a data structure for the point location problem in the arrangement of a set of curves with $O(T(n)+S(n)\log S(n))$ preprocessing and $O(\log n+\log C(n))$ query time, if an algorithm for constructing a planar subdivision of complexity $S(n)$ and cell size at most $C(n)$ in time $T(n)$ is given.
\end{itemize}
\section{The Point-Location Data-Structure for Curves}
In \Cref{alg:preprocess}, we use a polygonal partitioning such that the part of the curve inside each partition is monotone with respect to at least one boundary of that cell, which we call a {\em curve-monotone partitioning}. Then, we use a combination of point-location in polygonal subdivisions and binary search on the ordered set of polynomial arcs bounded by a segment of the subdivision (\Cref{alg:query}).

\subsection{A Curve-Monotone Polygonal Subdivision for Disjoint Arcs of Polynomials of Maximum Degree $2$}
After breaking the input into a set of disjoint $xy$-monotone convex/concave curves, \Cref{alg:subdivision} computes a subdivision of the plane, such that the order of the curves in each cell is monotone with respect to each cell boundary.

To do this, first compute the bounding box of each curve. Some of the bounding rectangles might be intersecting, or inside one another:
\begin{enumerate}[label=(\alph*)]
\item For a sequence of boxes inside one another (nested rectangles), merge their boxes by considering the outermost bounding box only, if there are no other rectangles outside this sequence inside them.
\item For intersecting boxes, add the tangents to the curves at the points of intersections with bounding rectangles of other curves and their own. Then, starting from a point on the boundary of the intersected area, move towards one of the curves, until hitting the other curve's bounding box or tangent, then repeat this on the other curve. If it is the intersection of two curves, draw the common tangent of the curves to separate them.
\item For disjoint rectangles inside another rectangle, we divide them by extending the edges of the rectangles inside the bigger rectangle, until intersecting the boundary of the extension of an edge of another rectangle.
\item For disjoint rectangles, extend their edges until intersecting the boundary or the extension of an edge to create a set of disjoint rectangles.
\item For intersecting nested bounding boxes, first handle the intersecting segments, then handle the disjoint and then the nested rectangles.
\end{enumerate}
An example of these cases is shown in~\Cref{fig:nested_intersecting}.
\begin{figure}[h]
	\centering
	\includegraphics[scale=0.7]{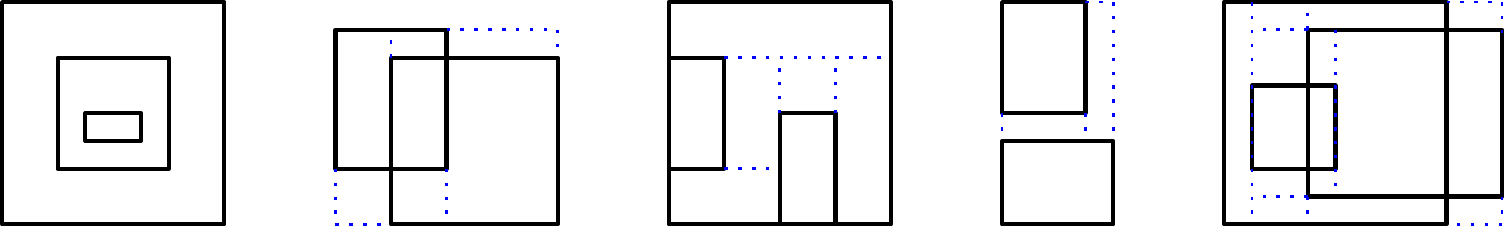}
	\caption{Four possible arrangements of bounding boxes and their subdivisions.}
	\label{fig:nested_intersecting}
\end{figure}

\begin{algorithm}[h]
\caption{Curve-Monotone Subdivision}
\label{alg:subdivision}
\begin{algorithmic}[1]
\Require{A set of curves $P_1,\cdots,P_n$}
\Ensure{A subdivision of the plane such that the distances of $\{P_i\}_{i=1}^n$ in each cell can be uniquely defined.}
\For{$i=1,\cdots,n$}
\State{$Q\gets Q\cup \{$Break each curve $P_i$ into $xy$-monotone totally concave or totally convex parts$\}$.}
\EndFor
\State{$A=$Build the arrangement of curves $Q$.}
\State{$B=$Connect the intersections and vertices using straight line segments.}
\State{$C=$Partition $B$ by drawing the bounding boxes of the curves in $Q$ or the mutual tangents of the curves in $Q$ when their bounding boxes intersect, except for the curves that are inside another curve.}
\State{Triangulate the cells of $C$ with complexity more than $3$.}
\State{Store for each segment $s$ the list of curves inside cell $c$ intersecting it as $L_{s,c}$ along with the index of their adjacent cells.}
\\ \Return{$C$ and $L_{s,c}, s\in c, c\in C$}
\end{algorithmic}
\end{algorithm}
\subsection{A Subdivision For Disks}
\Cref{alg:disks} computes a planar subdivision for a set of arbitrary disks.
\begin{algorithm}[h]
\caption{Subdivision for Disks}
\label{alg:disks}
\begin{algorithmic}[1]
\Require{A set of disks $P_1,\cdots,P_n$}
\Ensure{A subdivision of the plane $D$.}
\State{$A=$Build the arrangement of the disks $P_1,\cdots,P_n$.}
\State{$B=$Build the bounding box of the input, connect the intersections of the disks, their pairwise tangents and their centers to each other and to the boundary with straight lines.}
\State{$C=$Triangulate the cells of $B$ with complexity more than $3$.}
\State{Store for each segment $s$ the list of curves inside cell $c$ intersecting it as $L_{s,c}$ along with the index of their adjacent cells.}
\\ \Return{$C$ and $L_{s,c}, s\in c, c\in C$}
\end{algorithmic}
\end{algorithm}
An example of the possible cases of \Cref{alg:disks} is shown in \Cref{fig:disks}.
\begin{figure}[h]
	\centering
	\includegraphics[scale=0.7]{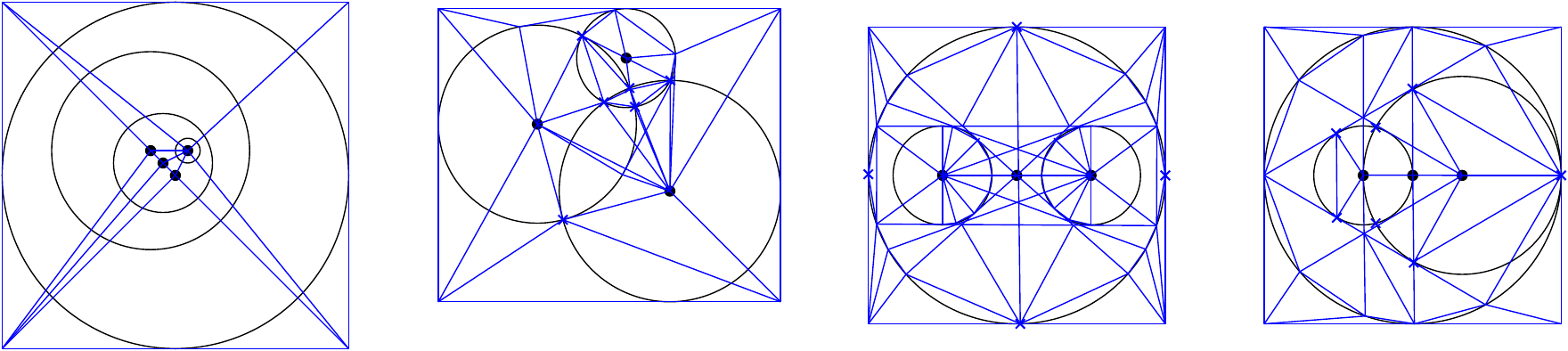}
	\caption{Four examples of the subdivisions of \Cref{alg:disks} for a set of disks.}
	\label{fig:disks}
\end{figure}
The size of the arrangement is $O(n^2)$, since at most $\binom{n}{2}$ lines connect the centers of the disks, there are $4\binom{n}{2}$ mutual tangents, and the bounding box adds at most $4$ vertices and therefore a multiple of at most $4$.
Because of the triangulation, the cell complexity is at most $3$
\subsection{The Point-Location Algorithm}
Assuming a curve-monotone planar subdivision is given, \Cref{alg:preprocess} builds a data-structure for the point-location problem.
\begin{algorithm}[h]
\caption{Preprocessing}
\label{alg:preprocess}
\begin{algorithmic}[1]
\Require{A set of curves $P_1,\cdots,P_n$}
\Ensure{A point location data structure $D$ augmented by a set of lists $L_{s,c}$}
\State{$C=$ Build a curve-monotone planar subdivision on $P_1,\cdots,P_n$.}
\State{Sort each list $L_{s,c}, s\in c, c\in C$, based on their distances from the segment $s$.}
\State{$D=$ Build a point-location data structure on $C$.}
\\ \Return{$D$ and $L_{s,c}, s\in c, c\in C$}
\end{algorithmic}
\end{algorithm}
Assuming the algorithm for constructing the curve-monotone planar subdivision has time complexity $T(n)$, space complexity $S(n)$ and maximum cell complexity $C(n)$, \Cref{alg:preprocess} takes $T(n)+O(S(n)\log S(n))$ for building the subdivision, sorting and constructing the point-location data-structure.

An example of a valid partitioning for the point-location using \Cref{alg:preprocess} is shown in \Cref{fig:partitioning_draft}.
\begin{figure}[h]
	\centering
	\includegraphics[scale=0.8]{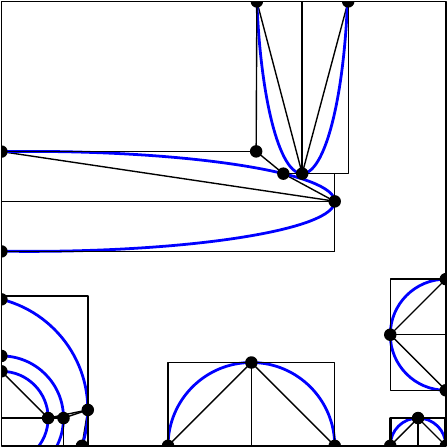}
	\caption{The input curves in blue, and the finer partitioning of $B$ into $C$ (\Cref{alg:preprocess}).}
	\label{fig:partitioning_draft}
\end{figure}

\begin{algorithm}[h]
\caption{Query}
\label{alg:query}
\begin{algorithmic}[1]
\Require{A point $q$, a point location data structure $D$ augmented by a set of lists $L_{s,c}$}
\Ensure{The cell of $D$ that contains $q$}
\State{$c=$ The output of point location of $q$ in $D$.}
\State{$X=\emptyset$}
\For{$s\in$ the boundary of $c$}
\State{$X\gets X\cup$ Binary search on the lists $L_{s,c}$, using the polynomial's equation to detect above/below to find the cell of the arrangement.}
\EndFor
\\ \Return{the only non-empty result in $X$.}
\end{algorithmic}
\end{algorithm}

\begin{theorem}
\Cref{alg:preprocess} using \Cref{alg:subdivision} for computing the subdivision has time complexity $O(n^4\log(n))$ and space complexity $O(n^4)$.
\end{theorem}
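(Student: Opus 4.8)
The plan is to instantiate the generic preprocessing bound $T(n)+O(S(n)\log S(n))$ established right after \Cref{alg:preprocess} with the concrete costs of \Cref{alg:subdivision}. Thus the whole argument reduces to three quantitative claims: (i) \Cref{alg:subdivision} produces a subdivision of combinatorial size $S(n)=O(n^4)$ with maximum cell complexity $C(n)=O(1)$; (ii) it runs in time $T(n)=O(n^4\log n)$; and (iii) the augmenting lists $L_{s,c}$ have total size $O(n^4)$, so sorting them and building the point-location structure of~\cite{de1997computational} on top of $C$ stays within $O(n^4\log n)$ time and $O(n^4)$ space. Summing these gives preprocessing time $O(n^4\log n)+O(n^4\log n)=O(n^4\log n)$ and space $O(n^4)$, which is the claim.

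First I would count the primitive objects created in each step of \Cref{alg:subdivision}. Breaking a polynomial of degree at most $2$ into $xy$-monotone, totally convex or totally concave arcs yields $O(1)$ arcs per input curve, so $|Q|=O(n)$; any two such arcs cross $O(1)$ times, hence the arrangement $A$ of $Q$ has $O(n^2)$ vertices, edges and faces and can be built in $O((n+k)\log n)=O(n^2\log n)$ time via the Jordan-arc arrangement algorithm of~\cite{toth2017handbook}, since $k=O(n^2)$. Connecting the $O(n^2)$ intersection points and arc endpoints by straight segments (step $B$) contributes $O(n^2)$ segments; the $n$ bounding boxes contribute $O(n)$ edges and the mutual tangents contribute $4\binom{n}{2}=O(n^2)$ lines (step $C$). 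Therefore $C$ is obtained as the overlay of a set of $m=O(n^2)$ line segments.

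Next I would invoke the standard facts that an arrangement of $m$ line segments has combinatorial complexity $O(m^2)$ and, because its number of intersection points is $O(m^2)$, can be constructed by a Bentley--Ottmann sweep in $O((m+m^2)\log m)=O(m^2\log m)$ time; with $m=O(n^2)$ this gives size $O(n^4)$ and construction time $O(n^4\log n)$. Triangulating the cells of complexity more than $3$ keeps the size $O(n^4)$, costs an additional $O(n^4)$ time, and makes $C(n)=O(1)$. For the lists, I would observe that each monotone convex or concave arc of $Q$ crosses any segment of the subdivision $O(1)$ times and meets only $O(n^2)$ of the $O(n^4)$ cells (it can cross only $O(n^2)$ of the $O(n^2)$ defining segments), so $\sum_{s\in c,\,c\in C}|L_{s,c}|=O(n^3)$, which is dominated by $S(n)$; sorting all of them by distance from their bounding segment costs $O(n^3\log n)$. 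Feeding $T(n)=O(n^4\log n)$, $S(n)=O(n^4)$ and $C(n)=O(1)$ into $T(n)+O(S(n)\log S(n))$ then yields the stated preprocessing time, and the subdivision, the point-location structure and the sorted lists together occupy $O(n^4)$ space.

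The main obstacle is the book-keeping for steps $B$ and $C$: one has to verify that "connecting the intersections and vertices" together with adding all tangent lines really produces only $O(n^2)$ generating segments, rather than one segment per pair of arrangement vertices (which would be $O(n^4)$ generators and inflate the final size to $O(n^8)$), and that the special treatment of nested and intersecting bounding boxes described before \Cref{fig:nested_intersecting} adds only $O(n^2)$ further features. A secondary point that should be stated explicitly is that triangulating a cell merely subdivides it and cannot destroy the monotonicity of the arcs it contains, so the output of \Cref{alg:subdivision} remains a curve-monotone polygonal subdivision and is therefore a legitimate input to \Cref{alg:preprocess}.
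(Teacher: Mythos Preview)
Your proposal is correct and follows essentially the same counting argument as the paper: bound $|Q|=O(n)$, $|A|=O(n^2)$, argue that the straight segments introduced in steps~$B$ and~$C$ number $O(n^2)$, and conclude that the resulting planar subdivision has size $O(n^4)$; the paper records this as the single line $|B|=|A|,\ |C|\le|B|^2=O(n^4)$ and then plugs into the $O(|C|\log|C|)$ cost of sorting and building the point-location structure. Your write-up is strictly more explicit (citing the Jordan-arc arrangement bound and a Bentley--Ottmann sweep for the construction time, and giving the tighter $O(n^3)$ bound on $\sum|L_{s,c}|$), and you correctly flag the one genuine book-keeping issue the paper elides, namely that ``connecting the intersections and vertices'' must be read as $O(n^2)$ segments rather than one per vertex pair.
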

\begin{proof}
The sizes of intermediate structures is as follows:
\[
|Q|=O(n), |A|=O((n)^2), |B|=|A|, |C|\leq|B|^2=O(n^4),
\sum_{s\in c, c\in C}|L_{s,c}|=O(|C|)
\]
Triangulating the regions does not increase the overall complexity, since this adds a linear number of edges.

Breaking each polynomial into monotone parts takes $O(1)$ time.
Computing $D$ takes $O(|C|\log|C|)=O(n^4\log (n))$ time.
Triangulating the regions takes linear time, and sorting the set of segments in each cell takes $O(|C|\log|C|)$.
So, the time complexity is dominated by the time complexity of computing $D$.
\end{proof}

\begin{theorem}
\Cref{alg:query} has time complexity $O(\log n)$.
\end{theorem}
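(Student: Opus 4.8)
The plan is to analyze \Cref{alg:query} step by step and bound the cost of each part by $O(\log n)$. The query consists of two phases: a point-location query in the polygonal data structure $D$, followed by a constant number of binary searches, one per boundary segment of the located cell $c$.

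First I would bound the point-location step. Since $D$ is built on the curve-monotone subdivision $C$, which by the preceding theorem has size $S(n) = O(n^4)$, a standard planar point-location structure (e.g.\ the one cited in~\cite{de1997computational}) answers a query in $O(\log S(n)) = O(\log n^4) = O(\log n)$ time. So step~1 of \Cref{alg:query} costs $O(\log n)$.

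Next I would bound the loop over the boundary segments of $c$. Because every cell of $C$ with complexity greater than $3$ is triangulated in \Cref{alg:subdivision}, each cell $c$ has at most $3$ boundary segments, so the loop runs $O(1)$ times. For each segment $s$, the binary search is performed on the list $L_{s,c}$, which was sorted during preprocessing (\Cref{alg:preprocess}) by distance from $s$; each comparison evaluates the relevant polynomial's equation at $q$ to decide above/below, which takes $O(1)$ time since the polynomials have bounded (degree-$2$) complexity. The list has length $|L_{s,c}| \le \sum_{s\in c, c\in C}|L_{s,c}| = O(|C|) = O(n^4)$, so each binary search costs $O(\log n^4) = O(\log n)$. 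Hence the whole loop costs $O(1)\cdot O(\log n) = O(\log n)$, and the final step of returning the unique non-empty entry of $X$ is $O(1)$. Summing the two phases gives the claimed $O(\log n)$ query time.

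The main point requiring care — the step I would treat as the real obstacle — is arguing that exactly one of the $O(1)$ binary searches yields a non-empty, correct answer, i.e.\ the correctness of the last line of \Cref{alg:query}. This relies on the curve-monotone property: inside $c$ the arcs are monotone with respect to at least one boundary segment $s$, so the relative order of the arcs along $s$ coincides with their order as stacked inside the cell, and a binary search along that $s$ correctly identifies the sub-face (between two consecutive arcs, or above/below all of them) containing $q$. For the other boundary segments the monotonicity may fail and the search returns no valid face; in any case the located sub-face of the arrangement is unique, so $X$ contains exactly one valid entry. Once this is in place, the time bound follows from the size and comparison-cost estimates above, and one only needs the triangulation of $C$ to guarantee the constant bound on the number of boundary segments.
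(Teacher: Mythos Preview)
Your proof follows the same two-phase analysis as the paper: bound the polygonal point-location step by $O(\log n)$, then use the triangulation to get $O(1)$ boundary segments per cell and bound each binary search by $O(\log n)$. The only difference is in the specific bounds you plug in---the paper cites cell complexity at most $4$ and list size at most $n$ (each $L_{s,c}$ contains at most one arc per original curve), whereas you invoke the coarser $O(n^4)$ bounds from the preceding theorem; since $\log(n^4)=O(\log n)$ either way, the argument is essentially the same.
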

\begin{proof}
The point-location in $D$ takes $O(\log(n))$ time, since it is a polygonal point location for an input of size $O(n)$.
Based on \Cref{alg:preprocess}, the complexity of each cell is at most $4$, and each list has size at most $n$, so the binary searches take $O(\log(n))$ time.
So, the total time complexity of a query is $O(\log n)$.
\end{proof}
\section{Point Location via Nearest Neighbor: \\The Landmark Method}
A method of solving the point location problem is to use a set of points called landmark point-location~\cite{haran2006efficient,fogel2012cgal}, by computing the nearest neighbor to the query point among the landmarks. The landmarks used in previous work were the vertices of a $\sqrt{n}\times \sqrt{n}$ grid~\cite{haran2006efficient,fogel2012cgal}. After finding the nearest landmark, the exact cell that contains the point is computed via walking. The performance of these algorithms depended on the input.

For the arrangement of a set of equal disks, the set of centers has been used as landmarks, after being augmented by a sorted list of cells inside each disk, which are used in binary searches at query time~\cite{aghamolaei2020windowing}. While this method gives worst-case $O(\log n)$ guarantee on the query time, it does not directly generalize, even to disks of arbitrary sizes.

Choosing a set of points as landmarks can be used to solve the point-location problem. In other words, we want the current subdivision to be a subset of the Voronoi diagram of a point set (the landmarks). This can be done by placing two points per edge of the diagram, such that the perpendicular bisector of the line connecting them is the edge of the subdivision, and the points are the closest landmark points in the cells who share that edge. Store the information of neighboring cells of each point.
Unlike the landmark method, this refined version does not require walking, as each cell of the Voronoi diagram uniquely identifies a cell of the original subdivision.

Using this method, the number of points is $2|E|=O(n)$. So, the running time remains the same and the algorithm computes an exact solution.
By computing a balanced clustering of the landmarks points, a point-location data-structure that is a tree with degree $k+1$, where $k$ is the number of clusters. This reduces the time complexity of computing the solution from $O(\log n)$ to $O(\log_k n)$ in a parallel setting with $k$ processors. When processing a batch of queries, this improves upon sending the query points to all processors, which saves a factor $k$ in the communication at each level of the tree.
\section{Open Problems}
\paragraph{Approximate Point-Location.}
While building an exact point-location data-structure might be expensive in terms of time complexity, it is possible to build approximate point-location data structures by enclosing the original curve between two simpler curves. Then, the exact point-location data-structure needs to be used for the area between the enclosing simplified curves.
For the one-sided error, such as never reporting a point inside (outside) the shape as outside (inside), the data-structure built on the bounding (enclosed) curve is enough to get an approximate solution.
\paragraph{Application: Point-Location among Contour Lines.}
For a given objective function and its discretization, the contour lines separate the input into regions such that the value of the objective function is the same between two contour curves. By building a point-location data structure on such an arrangement, it is possible to get the values with error dependent on the discretization used when building the contour lines.
An example of this method exists for the length query problem~\cite{aghamolaei2020windowing}.
%
%
%
 \bibliographystyle{splncs04}
 \bibliography{refs}

\begin{thebibliography}{1}
\providecommand{\url}[1]{\texttt{#1}}
\providecommand{\urlprefix}{URL }
\providecommand{\doi}[1]{https://doi.org/#1}

\bibitem{aghamolaei2020windowing}
Aghamolaei, S., Keikha, V., Ghodsi, M., Mohades, A.: Windowing queries using
  {M}inkowski sum and their extension to {MapReduce}. Journal of Supercomputing
   (2020)

\bibitem{de1997computational}
De~Berg, M., Van~Kreveld, M., Overmars, M., Schwarzkopf, O.: Computational
  geometry. In: Computational geometry, pp. 1--17. Springer (1997)

\bibitem{fogel2012cgal}
Fogel, E., Halperin, D., Wein, R.: CGAL arrangements and their applications: A
  step-by-step guide, vol.~7. Springer Science \& Business Media (2012)

\bibitem{haran2006efficient}
Haran, I.: Efficient point location in general planar subdivisions using
  landmarks. Tel Aviv University (2006)

\bibitem{shor1991stretchability}
Shor, P.: Stretchability of pseudolines is {NP}-hard. Applied Geometry and
  Discrete Mathematics-The Victor Klee Festschrift  (1991)

\bibitem{toth2017handbook}
Toth, C.D., O'Rourke, J., Goodman, J.E.: Handbook of discrete and computational
  geometry. Chapman and Hall/CRC (2017)

\end{thebibliography}
\end{document}